\documentclass[11pt, a4paper]{article}
\usepackage[margin = 1in]{geometry}
\usepackage{libertinus}
\usepackage{amsmath}
\usepackage{amssymb}
\usepackage{amsfonts}
\usepackage{amsthm} 
\usepackage{graphicx}
\usepackage{comment}
\usepackage{thm-restate}
\usepackage{hyperref}
\usepackage{cleveref}
\usepackage{here}
\usepackage{cite}
\usepackage{enumerate}
\usepackage{tabularx}
\usepackage{environ}
\usepackage{url}
\usepackage[normalem]{ulem}
\usepackage{color}
\usepackage[T1]{fontenc}
\usepackage{booktabs}

\newtheorem{theorem}{Theorem}
\newtheorem{lemma}[theorem]{Lemma}

\newtheorem{claim}[theorem]{Claim}

\newtheorem{proposition}[theorem]{Proposition}

\usepackage{algorithm}
\usepackage[noend]{algpseudocode}

\algrenewcommand\algorithmicrequire{\textbf{Input:}}
\algnewcommand\algorithmiccontinue{\textbf{continue}}
\algnewcommand\algorithmicbreak{\textbf{break}}

\global\long\def\eps{\varepsilon}%
\global\long\def\epsilon{\varepsilon}%
\global\long\def\eps{\varepsilon}%

\newcommand{\bR}{\mathbb{R}}

\newcommand{\Poly}{\mathrm{poly}}

\newcommand{\ds}{\displaystyle}

\newcommand{\subimportant}[1]{\textbf{#1}}
\newcommand{\important}[1]{\textcolor{red}{\textbf{#1}}}

\ifdefined\ShowComment
\newcommand{\tatsuya}[1]{{\bf \color{red} tatsuya: #1}}
\else
\newcommand{\tatsuya}[1]{}
\fi

\title{Improved Pass Complexity for Semi-Streaming Matching \\ in General Graphs via Budgeted Price Approach}

\author{Tatsuya Terao\thanks{
Research Institute for Mathematical Sciences, Kyoto University.
E-mail: ttatsuya@kurims.kyoto-u.ac.jp
}
}

\date{}

\begin{document}

\maketitle

\begin{abstract}
We present a simple deterministic semi-streaming algorithm that computes a $(1-\varepsilon)$-approximate maximum matching in general graphs in $O(1/\varepsilon^{2})$ passes.
This matches the pass complexity of the auction algorithm of Assadi, Liu, and Tarjan~[SOSA'21] for bipartite graphs, and answers Open Question~2 of Assadi~[SOSA'24 \& TheoretiCS'25].

Our approach, which we call the \emph{budgeted price approach}, is very different from the previous $\mathrm{poly}(1/\varepsilon)$-pass semi-streaming algorithms for $(1-\varepsilon)$-approximation of matching in general graphs and improves their pass complexity.
All previous approaches are augmenting-path-based and use $O(1/\varepsilon^{19})$ passes by Fischer, Mitrovi\'{c}, and Uitto~[STOC'22], $O(1/\varepsilon^{6})$ passes by Mitrovi\'{c}, Mukherjee, Sankowski, and Sheu~[ICALP'25], and $O(1/\varepsilon^{4})$ passes by Izumi, Kitamura, and Yamaguchi~['25].
\end{abstract}

\section{Introduction} \label{sec:introduction}

In the semi-streaming model for graph problems, formalized by Feigenbaum, Kannan, McGregor, Suri, and Zhang~\cite{feigenbaum2005graph}, the edges of an $n$-vertex graph $G=(V,E)$ arrive in an arbitrarily ordered stream, and an algorithm makes a small number of passes over the stream using $O(n \cdot \Poly(\log n)$ space.
We study the maximum matching problem in this model.
The goal is to compute a $(1-\eps)$-approximate maximum matching in as few passes as possible, using $O(n \cdot \Poly(\log n, 1/\eps))$ space.

The first $(1-\eps)$-approximation semi-streaming algorithm, due to McGregor~\cite{mcgregor2005finding}, is randomized and uses $(1/\eps)^{O(1/\eps)}$ passes.
Tirodkar~\cite{tirodkar2018deterministic} later gave a deterministic algorithm with the same pass complexity.

For bipartite graphs, Eggert, Kliemann, and Srivastav~\cite{eggert2009bipartite} gave a deterministic algorithm with $O(1/\eps^{8})$ passes, which was improved to $O(1/\eps^{5})$ passes by Eggert, Kliemann, Munstermann, and Srivastav~\cite{eggert2012bipartite}.
Ahn and Guha~\cite{ahn2013linear} obtained $O(\log(1/\eps)/\eps^{2})$ passes even for weighted bipartite graphs, and the auction algorithm of Assadi, Liu, and Tarjan~\cite{assadi2021auction} achieves $O(1/\eps^{2})$ passes.
\subimportant{Their auction algorithm computes a sequence of $O(1/\eps^{2})$ maximal matchings in subgraphs of the input graph}, and hence can be implemented in the semi-streaming model with $O(1/\eps^{2})$ passes.

For general graphs, whether $\Poly(1/\eps)$ passes suffice remained open for a long time, until Fischer, Mitrovi{\'c}, and Uitto~\cite{fischer2022deterministic} gave a deterministic semi-streaming algorithm with $O(1/\eps^{19})$ passes.
The pass complexity was subsequently improved to $O(1/\eps^{6})$ by Mitrovi\'{c}, Mukherjee, Sankowski, and Sheu~\cite{mitrovic_et_al:LIPIcs.ICALP.2025.119}, and to $O(1/\eps^{4})$ by Izumi, Kitamura, and Yamaguchi~\cite{izumi2025forgetting}.


Assadi~\cite{assadi2025simple} posed, as an open question, whether this gap can be closed, i.e., whether $O(1/\eps^{2})$ passes are achievable for general graphs.
We answer this question affirmatively.

\begin{restatable}{theorem}{main} \label{thm:main}
    For any $\eps > 0$, there exists a deterministic semi-streaming algorithm that computes a $(1 - \eps)$-approximate maximum matching 
    in $O(1/\eps^2)$ passes.
    The algorithm uses $O(n/\eps^2)$ words of space.\footnote{Throughout this paper, a word consists of $O(\log(n/\eps))$ bits.}
\end{restatable}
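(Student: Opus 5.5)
We will prove Theorem~\ref{thm:main} by extending the auction algorithm of Assadi, Liu, and Tarjan to general graphs through prices placed on vertices that are \emph{budgeted}. Every vertex $v$ has a price $p_v \in \{0, \eps, 2\eps, \dots, 1\}$. For an edge $e=(u,v)$, write $w_e := 1 - p_u - p_v$ for its residual value. A matched vertex is temporarily ``owned'' by its partner. The algorithm runs in $O(1/\eps^2)$ rounds, and each round is one pass. In a round, the stream is scanned and we greedily compute a maximal matching in the subgraph $H$ of \emph{profitable} edges. These are edges between currently unmatched (or cheaply matched) vertices whose prices satisfy $p_u + p_v \le 1 - \eps$. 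Each edge of this maximal matching then replaces the conflicting old matched edges, and the vertices involved raise their prices by $\eps$. Because the graph is not bipartite, there is no side that does the bidding. Instead, each vertex keeps a budget: a cap on its total price increases, charged against the unit of value of the matching edge it sits on. This gives the LP-duality certificate that bipartite auctions get for free.

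The proof then proceeds in three steps.
\begin{enumerate}[(i)]
\item \emph{Potential bound on rounds.} Define $\Phi = \sum_v p_v \le n$. Any round in which the current matching $M$ is not yet $(1-\eps)$-approximate has a maximal matching in $H$ of size $\Omega(\eps\,|M^*|)$. Each of its edges raises $\Phi$ by $\eps$. Since $\Phi \le O(|M^*|)$ when it is restricted to vertices touched by $M^*$, there are at most $O(1/\eps^2)$ such rounds. To prove this, we charge prices only to vertices that are matched or that were matched at some point, and we need $|M^*|$ and $|M|$ to be within a factor of $2$ of each other. We guarantee this by initializing $M$ with a greedy maximal matching in one pass.
\item \emph{Approximation at termination.} Suppose $H$ has a small maximal matching. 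Then all but $O(\eps|M^*|)$ edges of $M^*$ satisfy $p_u + p_v \ge 1-\eps$, so $(p, \cdot)$ is a near-feasible fractional vertex cover restricted to $M^*$. This gives $|M^*| \le \sum_v p_v + O(\eps|M^*|)$. The budget invariant gives $\sum_v p_v \le |M| + O(\eps |M^*|)$: each matched edge's endpoint prices sum to at most $1$, and unmatched vertices have price $0$, or price bounded by the slack we charged. Chaining the two inequalities gives the $(1-\eps)$ guarantee after rescaling $\eps$.
\item \emph{Space.} We store only $M$, the prices, and the budgets, which is $O(n)$ words. We also store, per vertex, a history of $O(1/\eps^2)$ price-level events, or equivalently the lists needed for tie-breaking. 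This gives $O(n/\eps^2)$ words in total. Determinism is immediate because all choices are greedy in stream order.
\end{enumerate}

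The main obstacle is the invariant in step~(ii) relating $\sum_v p_v$ to $|M|$. In general graphs, an odd cycle can let prices rise on both endpoints of a matched edge without the matching growing. That is exactly the situation the budget must forbid. I would first try the following rule: an edge $(u,v)$ of $M$ carries total budget $1$, and a price increase at $u$ is allowed only if $p_u + p_{M(u)} + \eps \le 1$. When a vertex would exceed its budget, it is ``frozen,'' and we argue that frozen vertices are covered by the matching itself. The delicate part is then to show that freezing does not spoil step~(i). Frozen vertices would have to absorb only $O(\eps|M^*|)$ edges of $M^*$ per round, and I would try to bound this by charging each such $M^*$-edge to the $\Omega(1)$ price mass sitting on a frozen edge of $M$.
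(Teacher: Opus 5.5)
There is a genuine gap, and it is the step you flag as delicate. The two inequalities of step~(ii) cannot hold together in a non-bipartite graph, whatever the precise update and freezing rules are. Let $G$ be $k$ disjoint triangles, so $\mu(G)=k$. Your implication ``small maximal matching in $H$ $\Rightarrow$ all but $O(\eps|M^*|)$ edges of $M^*$ have $p_u+p_v\ge 1-\eps$'' does not depend on which maximum matching $M^*$ is used. So take $M^*$ to be the edge of smallest price sum in each triangle. Then in all but $O(\eps k)$ triangles all three edges have $p_u+p_v\ge 1-\eps$, so each such triangle carries total price at least $\frac32(1-\eps)$. This gives $\sum_v p_v\ge(\frac32-O(\eps))k$, contradicting the budget invariant $\sum_v p_v\le |M|+O(\eps k)\le(1+O(\eps))k$.

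Hence on this graph the state assumed in step~(ii) is self-contradictory, yet the potential argument of step~(i) forces it to occur within $O(1/\eps^2)$ rounds. This is the $3/2$ gap between fractional vertex cover and matching in non-bipartite graphs. Vertex prices with a per-matched-edge cap are a fractional-cover certificate, and no such certificate of near-optimality exists here.

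Freezing is the natural patch, since it removes the cheap edge of each triangle from $H$. But then a small maximal matching in $H$ says nothing about $M^*$-edges at frozen vertices, which removes the only justification of your first inequality. The charging argument that would repair this is left as ``I would try,'' and it is exactly the missing idea.

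Step~(i) has a related hole. A vertex whose matched edge is replaced becomes unmatched but keeps its positive price. Also, price increases happen at endpoints of the matching in $H$, which need not be covered by $M^*$. So $\Phi=O(|M^*|)$ is unsupported; in the bipartite auction it holds because a matched item stays matched.

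The paper never uses a fractional-cover certificate. It keeps a sparse subgraph $H$ and defines the budget globally, $C_H(y)=\max_M\{|M|+y(V\setminus V(M))\}$ over \emph{integral} matchings $M\subseteq H$. On a triangle, for instance, $y=0$ already has budget $\mu(H)=1$, whereas any cover by vertex prices needs total $3/2$.

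For a guess $B$, the paper's algorithm proceeds as follows:
\begin{itemize}
\item It approximately minimizes the strongly convex energy $\|y\|_2^2+\beta\,r_y(H)$ over $\{y\in[0,1]^V : C_H(y)\le B\}$, obtaining a price vector $a$.
\item It spends one pass adding a $6$-approximate maximum $r_a$-weight matching $Q$ of $E\setminus H$.
\item If $B\le(1-\delta)\mu(G)$, a maximum matching has residual at least $\mu(G)-B$ outside $H$. Strong convexity turns this into a potential gain of $\Omega(\delta^2\mu(G))$ per pass.
\item The potential stays at most $B$ while $\mu(H)\le(1-\delta)B$. This is witnessed by a Tutte--Berge price vector: $1$ on the Tutte set, $0$ on small odd components, $1/2$ elsewhere.
\end{itemize}
That Tutte--Berge step is where odd cycles are handled, and your proposal has no counterpart to it. Finally, a binary search on $B$ with geometrically shrinking width keeps the total $\sum_i O(1/\delta_i^2)$ at $O(1/\eps^2)$ passes.
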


Our approach, which we call the \emph{budgeted price approach}, is very different from the previous $\mathrm{poly}(1/\varepsilon)$-pass semi-streaming algorithms for the $(1-\varepsilon)$-approximation of matching in general graphs~\cite{fischer2022deterministic,mitrovic_et_al:LIPIcs.ICALP.2025.119,izumi2025forgetting}.
\important{Our algorithm computes a sequence of $O(1/\eps^{2})$ weighted matchings in subgraphs of the input graph, where the edge weights are carefully chosen at each step.}
Moreover, a constant-factor approximate maximum weight matching suffices at each step, which can be computed in a single pass~\cite{feigenbaum2005graph, epstein2011improved, zelke2012weighted, crouch2014improved, paz20182, ghaffari_et_al:OASIcs.SOSA.2019.13}.
Hence, our algorithm can be implemented in the semi-streaming model with $O(1/\eps^{2})$ passes.

\subsection{Additional Related Work}

Semi-streaming algorithms with $O(\log n \cdot \Poly(1/\eps))$ passes have also been studied~\cite{ahn2011laminar,ahn2013linear,ahn2018access,assadi2022semi,assadi2025simple}, and in particular, $O(\log(n)/\eps)$ passes are achievable for general graphs~\cite{ahn2011laminar,ahn2018access,assadi2025simple}.

In the single-pass setting, the greedy algorithm that computes a maximal matching gives a $1/2$-approximation, and a line of work has established lower bounds on the achievable approximation ratio~\cite{goel2012communication,kapralov2013better,kapralov2021space,assadi2026semi}. 
Whether greedy can be beaten had been one of the most well-known open questions in graph streaming.
Very recently, Assadi, Jiang, and Xiang~\cite{assadi2026semib} resolved it by showing that $1/2$ is optimal.


Recently, Konrad, Naidu, Walton, and Wang~\cite{konrad_et_al:LIPIcs.ESA.2026.120} further simplified the $O(1/\eps^2)$-pass semi-streaming algorithm of Assadi, Liu, and Tarjan~\cite{assadi2021auction} and provided a hard instance on which the algorithm requires $\Omega(1/\eps^2)$ iterations.

Assadi and Sundaresan~\cite{assadi2023hidden} proved a conditional $\Omega(\log(1/\eps))$-pass lower bound, even for $(1-\eps)$-approximating the maximum matching size.
As reported in Sundaresan's Ph.D. thesis~~\cite{sundaresan2026new}, Assadi, Konrad, Naidu, and Sundaresan established the same lower bound unconditionally for semi-streaming algorithms that output a $(1-\eps)$-approximate matching.


\section{Preliminaries} \label{sec:preliminaries}





\paragraph{Notation.}
For $y \in \bR^V$ and $S \subseteq V$, let $y(S) := \sum_{v \in S} y_v$, and for an edge $e = \{u, v\}$, let $y(e) := y_u + y_v$.
We also write $\|y\|_2^2 := \sum_{v \in V} y_v^2$.

For $w:F\to\bR$ and $F'\subseteq F$, let $w(F'):=\sum_{e\in F'}w(e)$.

For $a, b \in \bR^V$, the vector $a + b \in \bR^V$ is defined by $(a + b)_v := a_v + b_v$ for $v \in V$.
We write $a \le b$ if $a_v \le b_v$ for every $v \in V$, and $0$ denotes the all-zero vector.

Throughout, we identify an edge set $H \subseteq E$ with the subgraph $(V, H)$ of $G$, and we write $\mu(H)$ for the maximum size of a matching $M \subseteq H$.
In particular, $\mu(G) := \mu(E)$ is the maximum size of a matching in $G$.
For a matching $M$, $V(M)$ denotes the set of vertices covered by $M$.



\paragraph{Tutte-Berge Formula}

We use the following classical min-max formula for the maximum matching size.
Here, we write $H - U$ for the graph obtained from $(V, H)$ by deleting the vertices in $U$ together with the edges incident to them.

\begin{theorem}[see e.g. Theorem~24.1 of \cite{schrijver2003combinatorial}] \label{thm:tutte_berge}
    For every $H \subseteq E$, there exists a set $U \subseteq V$ such that
    \begin{equation*}
        \mu(H) \ =\ |U| + \sum_{S} \left\lfloor \frac{|S|}{2} \right\rfloor ,
    \end{equation*}
    where $S$ ranges over the vertex sets of the connected components of $H - U$.
\end{theorem}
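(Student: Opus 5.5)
We derive this version of the Tutte--Berge formula from the standard one, $\mu(H) = \min_{U \subseteq V} \frac{1}{2}\bigl(|V| + |U| - \mathrm{odd}(H-U)\bigr)$, where $\mathrm{odd}(H-U)$ is the number of odd components of $H-U$. Our first step is the easy inequality, valid for every $U$: any matching $M \subseteq H$ has at most $|U|$ edges touching $U$. Every other edge of $M$ lies inside a single component $S$ of $H-U$. Such a component contains at most $\lfloor |S|/2 \rfloor$ of these edges, so
\begin{equation*}
\mu(H) \le |U| + \sum_S \lfloor |S|/2 \rfloor .
\end{equation*}

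Next we rewrite the right-hand side. Using $\lfloor |S|/2 \rfloor = (|S| - [\,|S| \text{ odd}\,])/2$ and $\sum_S |S| = |V| - |U|$, we get
\begin{equation*}
|U| + \sum_S \lfloor |S|/2 \rfloor = \tfrac{1}{2}\bigl(|V| + |U| - \mathrm{odd}(H-U)\bigr).
\end{equation*}
This is exactly the deficiency expression in the classical formula. So the claim is equivalent to the existence of a $U$ at which the standard Tutte--Berge bound is attained. That existence is precisely the nontrivial direction of Theorem~24.1 in \cite{schrijver2003combinatorial}, which we can simply invoke.

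If instead we want a self-contained argument, the hard step is producing the tight $U$. We would use the Gallai--Edmonds decomposition:
\begin{itemize}
\item let $D$ be the set of vertices missed by some maximum matching of $H$;
\item let $A$ be the set of neighbours of $D$ outside $D$;
\item let $C$ be the remaining vertices, and take $U := A$.
\end{itemize}
The Gallai--Edmonds structure theorem then supplies three facts:
\begin{itemize}
\item every component of $H[D]$ is factor-critical, hence odd, and every maximum matching covers it near-perfectly;
\item $H[C]$ has a perfect matching, so its components are even and are fully matched;
\item every maximum matching matches $A$ into distinct components of $D$.
\end{itemize}
Counting the edges of a maximum matching component by component then gives equality in the bound from the first step. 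Proving the structure theorem itself (via Edmonds' blossom shrinking, or by induction on $|V|$ using a vertex contained in every maximum matching) is the main obstacle. Since the statement is cited as classical, we would rely on the reference rather than reprove it.
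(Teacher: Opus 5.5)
Your proposal is correct and takes essentially the same approach as the paper, which gives no proof and simply cites Theorem~24.1 of Schrijver. Your identity $|U| + \sum_S \lfloor |S|/2 \rfloor = \frac{1}{2}\bigl(|V| + |U| - \mathrm{odd}(H-U)\bigr)$ is exactly what translates the cited deficiency form into the form stated here, and your optional Gallai--Edmonds sketch with $U := A$ is also a valid way to exhibit the tight set.
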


\subsection{One-Pass Streaming algorithm for Weighted Matching}

We use a one-pass semi-streaming algorithm for constant-factor approximate maximum weight matching.
Feigenbaum, Kannan, McGregor, Suri, and Zhang~\cite{feigenbaum2005graph} gave a $6$-approximation, and the approximation ratio has since been improved by several subsequent works~\cite{epstein2011improved,zelke2012weighted,crouch2014improved,paz20182,ghaffari_et_al:OASIcs.SOSA.2019.13}.

\begin{lemma}[From Theorem 2 of \cite{feigenbaum2005graph}] \label{lemma:one_pass_weighted}
    There exists a deterministic semi-streaming algorithm that computes a $6$-approximate maximum weighted matching in one pass. The algorithm uses $O(n)$ words of space.
\end{lemma}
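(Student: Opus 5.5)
We do not reprove this lemma from scratch. It is Theorem~2 of \cite{feigenbaum2005graph}, and we would invoke it as a black box. For completeness, here is the argument we would give.

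\textbf{Algorithm.} Maintain a matching $M$, initially empty. When an edge $e$ arrives, let $C \subseteq M$ be the set of at most two edges of $M$ sharing an endpoint with $e$. If $w(e) > (1+\gamma)\, w(C)$ for a fixed $\gamma > 0$, replace $C$ by $e$ in $M$; otherwise discard $e$. We store only $M$, which has at most $n/2$ edges, so the space is $O(n)$ words and one pass suffices.

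\textbf{Step 1: bound the weight of removed edges.} Call an edge \emph{killed} if it was once in $M$ and later removed, and let $T$ be the total weight of killed edges. When $e$ kills $C$, we have $w(C) < w(e)/(1+\gamma)$. Each final edge of $M$ is therefore the root of a tree of killed edges whose weights shrink geometrically by a factor $1/(1+\gamma)$ per level. Summing the levels (or inducting on arrival order) gives
\[ T \le w(M)/\gamma . \]

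\textbf{Step 2: charging argument.} This is the main step. Fix an optimal matching $M^*$.
\begin{itemize}
\item Each $e^* \in M^*$ that is rejected on arrival had conflicting edges $C$ with $w(e^*) \le (1+\gamma) w(C)$. Charge $w(e^*)$ to the edges of $C$ in proportion to their weights, so each $f \in C$ receives at most $(1+\gamma) w(f)$.
\item Each $e^* \in M^*$ that is accepted charges itself.
\item When an edge is killed, it passes its accumulated charge on to its killer.
\end{itemize}
Since $M^*$ is a matching, each edge receives direct charge from at most two optimal edges, one per endpoint. The charge an edge inherits from its killed children is bounded by the tree weights controlled in Step~1.

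Tracking this per endpoint and summing over the trees rooted at edges of $M$ should give a bound of the form
\[ w(M^*) \le c(\gamma)\,\bigl(w(M) + T\bigr). \]
Combining with Step~1 and optimizing over $\gamma$ then gives a constant ratio. The choice $\gamma = 1$ gives $6$ with the careful per-endpoint accounting of \cite{feigenbaum2005graph}.

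\textbf{Expected obstacle.} The only delicate point is the charge accounting when an edge is killed while already carrying charge. The charge must be transferred so that each surviving edge's total charge is bounded per endpoint, not doubly counted. We would handle this by charging each optimal edge to the endpoint at which it conflicts and inducting over the stream. Since the exact constant does not matter for this paper, any constant-factor version of this argument suffices.
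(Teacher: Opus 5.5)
Your proposal matches the paper, which gives no proof of its own and simply invokes Theorem~2 of \cite{feigenbaum2005graph} as a black box; the algorithm you sketch is indeed the one from that reference. One caveat on the optional sketch: if a killed edge passes \emph{all} of its charge to its killer, the accounting only gives $w(M^*) \le 2(1+\gamma)(w(M)+T)$, i.e.\ ratio $8$ at $\gamma=1$, so reaching $6$ requires transferring only the charge at the shared endpoint, which yields $w(M^*) \le (1+\gamma)(T+2w(M)) \le (1/\gamma + 3 + 2\gamma)\,w(M)$ (and any other constant $c$ would also serve this paper after retuning $\beta = \delta/(4c)$).
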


\section{Budgeted Price Approach} \label{sec:algorithm}

In this section, we present our algorithm and prove Theorem~\ref{thm:main}.
We begin by introducing two notions, prices and budgets, on which both the algorithm and its analysis are built.

\subparagraph*{Prices.}
A vector $y \in [0,1]^V$ is called a \emph{price vector}.\footnote{The name is reminiscent of the auction algorithm of Assadi, Liu, and Tarjan~\cite{assadi2021auction}.}
We call $y_v$ as the price of the vertex $v$.
For an edge $e$, we define the \emph{residual} of $e$ under $y$ by
\begin{equation*}
    r_e(y) := \max\{1 - y(e), 0 \} ,
\end{equation*}
and for $F \subseteq E$ we write $r_y(F) := \sum_{e \in F} r_e(y)$.

\subparagraph*{Budgets.}
For a subgraph $H \subseteq E$ and a vector $y \in \bR^V$, we define
\begin{equation*}
    C_H(y) := \max \Bigl\{ |M| + y(V \setminus V(M)) \Bigm| M \subseteq H \text{ is a matching} \Bigr\}.
\end{equation*}
We call $C_H(y)$ the \emph{budget} of $y$ on $H$.

The following properties are immediate from the definition.
\begin{enumerate}
    \item[(a)] For every $y \in \bR^V$,
    \begin{equation*}
        C_H(y) = y(V) + \max \bigl\{ r_y(M) \bigm| M \subseteq H \text{ is a matching} \bigr\}.
    \end{equation*}
    Indeed, for every matching $M \subseteq H$ we have $|M| + y(V \setminus V(M)) = y(V) + \sum_{e \in M} (1 - y(e))$, because $y(V(M)) = \sum_{e \in M} y(e)$.
    The last sum is at most $r_y(M)$, and it equals $r_y(M') = r_y(M)$ for the submatching $M' := \{ e \in M \mid y(e) \le 1 \}$.
    Hence, the maximum of $\sum_{e \in M} (1 - y(e))$ over all matchings $M \subseteq H$ equals the maximum of $r_y(M)$.
    \item[(b)]  For each fixed matching $M \subseteq H$, the quantity $|M| + y(V \setminus V(M))$ is a constant plus the sum of some coordinates of $y$, and hence it is convex and nondecreasing in each coordinate of $y$.
    Since $C_H$ is the maximum of finitely many such quantities, $C_H$ is also convex and nondecreasing in each coordinate.
    Moreover, $C_H(y) \le C_{H'}(y)$ whenever $H \subseteq H' \subseteq E$, since the maximum is taken over a larger family of matchings.
    \item[(c)] $C_H(0) = \mu(H)$, and $C_H(y) \ge \mu(H)$ for every $y \ge 0$.
\end{enumerate}

\subsection{Probe Procedure} \label{sec:probe}

The core of our algorithm is a procedure $\textsc{Probe}(B, \delta)$, which receives a \emph{budget} $B > 0$ and an accuracy parameter $\delta \in (0, 1)$.
The budget $B$ should be thought of as a guess for $\mu(G)$.
It either finds a matching of size more than $(1 - \delta) B$, or certifies that $B > (1 - \delta)\mu(G)$.
Our overall algorithm then performs a binary search for $\mu(G)$ using $\textsc{Probe}$ as the test (see Section~\ref{sec:whole} for details).

The guarantee of $\textsc{Probe}$ is formalized in the following proposition.
\begin{proposition} \label{prop:probe_main}
    Let $B > 0$ and $\delta \in (0, 1)$.
    $\textsc{Probe}(B, \delta)$ uses $T = O(1/\delta^2)$ passes, and the subgraph $H$ stored by $\textsc{Probe}(B, \delta)$ always has $O(n/\delta^2)$ edges.
    Moreover, if $B \le (1 - \delta)\cdot \mu(G)$, then $\textsc{Probe}(B, \delta)$ returns a matching of size more than $(1 - \delta) B$.
\end{proposition}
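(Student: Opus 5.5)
The plan is to define $\textsc{Probe}(B,\delta)$ as a projected online-gradient process on price vectors, with $H$ the union of the matchings it finds. Start with $y^{(0)}=0$ and $H=\emptyset$. In iteration $t=1,\dots,T$ with $T=\Theta(1/\delta^2)$, make one pass and run Lemma~\ref{lemma:one_pass_weighted} with edge weights $r_e(y^{(t-1)})$. This gives a matching $M_t$ with $r_{y^{(t-1)}}(M_t)\ge \frac16\max_M r_{y^{(t-1)}}(M)$. Add $M_t$ to $H$, so $H$ is a union of $T$ matchings and has $O(n/\delta^2)$ edges. Then raise prices on the endpoints of the edges of $M_t$ with positive residual by a step $\eta=\Theta(\delta)$, and project back onto the box $[0,1]^V$. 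After $T$ passes, compute offline a maximum matching of $H$. Return it if its size exceeds $(1-\delta)B$; otherwise declare $B>(1-\delta)\mu(G)$. The pass and space bounds are then immediate, so the real content is the correctness statement.

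For correctness I argue by contradiction: assume $B\le(1-\delta)\mu(G)$ but $\mu(H)\le(1-\delta)B$. I use the budget function in two directions. Upper bound: by Theorem~\ref{thm:tutte_berge} applied to $H$, take the set $U$ and the components $S$ of $H-U$. Define a comparison price $y^*$ with $y^*_v=1$ on $U$, and on each component $S$ with $|S|\ge2$ spread mass so that $y^*(S)\approx\lfloor|S|/2\rfloor$ (about $1/2$ per vertex; singletons get $0$). I then want to show $C_H(y^*)$ is close to $\mu(H)$ and that every edge of $H$ is (nearly) covered by $y^*$. The odd-component rounding loss is what forces the budget formulation rather than a pure vertex cover. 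Lower bound: by property~(c), $C_G(y)\ge\mu(G)\ge B/(1-\delta)$ for every $y\ge0$. Also, by property~(a), the one-pass matching certifies $r_{y^{(t-1)}}(M_t)\ge\frac16(C_G(y^{(t-1)})-y^{(t-1)}(V))$. So at each step, either the found matching carries large residual, or $y^{(t-1)}(V)$ is already large.

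The core step is a regret bound of the standard form. I treat $g_t=$ the indicator-type vector of the endpoints of the positive-residual edges of $M_t$ as the step direction. I use the potential $\|y^{(t)}-y^*\|_2^2$, and bound $\|g_t\|_2^2\le 2|M_t|$ together with $\|y\|_2^2\le y(V)$ on the box. Summing over $t$ should give
\[
\sum_{t}\bigl(r_{y^{(t-1)}}(M_t)-(\text{residual of } M_t \text{ under } y^*)\bigr)\ \lesssim\ \frac{\|y^*\|_2^2}{\eta}+\eta\sum_t|M_t| .
\]
Since $M_t\subseteq H$ and $y^*$ nearly covers $H$, the $y^*$ term is small. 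Both $\|y^*\|_2^2$ and $|M_t|$ are $O(B)$ (otherwise we already have a large matching), so with $\eta=\delta$ and $T=1/\delta^2$ the average regret is $O(\delta B)$. Combining this with the lower bound on $C_G(y^{(t-1)})$ yields an averaged price vector $\bar y$ satisfying $\bar y(V)\ge (1+\Omega(\delta))B$. But the projection and the update rule keep $y(V)$ charged to matched edges of $H$, so $\bar y(V)\lesssim \mu(H)+O(\delta B)$, contradicting $\mu(H)\le(1-\delta)B$ once constants are tuned. The factor $6$ from Lemma~\ref{lemma:one_pass_weighted} is absorbed into the constant in $T$.

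The main obstacle is the comparison price $y^*$ on the odd components. Vertex prices cannot certify $\mu(H)$ exactly in general graphs, because of the integrality gap on odd sets. I must show that the half-price loss on each odd component is either paid for by the budget slack $C_H(y^*)-y^*(V)$, or is at most $O(\delta B)$ because small components contribute little. If the $1/2$-per-vertex choice fails there, I would first try pricing each odd component $S$ at $\frac{|S|-1}{|S|}\cdot\frac12$ per vertex and treating components of size above $1/\delta$ separately.
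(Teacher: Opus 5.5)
Your pass and space accounting is fine, but the correctness argument has a genuine gap. The gap is already in the algorithm, so it cannot be closed by a better analysis.

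\textbf{The step that fails.} Your last step claims $\bar y(V)\lesssim\mu(H)+O(\delta B)$, and nothing supports it. Prices only increase, so the process drives $y$ toward a fractional vertex cover of the edges it keeps selecting. In a general graph such a cover can cost $\tfrac32\mu(H)$.

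\textbf{A counterexample.} Let $G$ be $k$ disjoint copies of two triangles $a_1a_2a_3$ and $b_1b_2b_3$ joined by the bridge $a_1b_1$, so $\mu(G)=3k$.
\begin{enumerate}
\item Nothing in Lemma~\ref{lemma:one_pass_weighted} stops the oracle from returning, in each copy, $\{a_1a_2,b_1b_2\}$ and $\{a_1a_3,b_1b_3\}$ alternately until $y_{a_1},y_{b_1}\ge 1/2$. With $p:=y_{a_1}=y_{b_1}$, this collects about $2-3p$ per copy against an optimum of about $3-4p$. The ratio stays at least about $1/2$, far above $1/6$.
\item From then on every bridge has residual $0$ forever, because prices never decrease. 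The oracle may therefore return exact maximum-weight matchings of the triangle edges.
\item Hence no bridge ever enters $H$, and $\mu(H)=2k$.
\item Within $O(1/\eta)$ rounds every triangle edge has zero residual. This forces $y(\text{triangle})\ge 3/2$ for every triangle, so $y(V)\ge 3k=\tfrac32\mu(H)$.
\end{enumerate}
Take $B=(1-\delta)\mu(G)$ and $\delta\le 1/10$. Then $(1-\delta)B\ge 2.43k>2k=\mu(H)$, so your $\textsc{Probe}$ wrongly declares $B>(1-\delta)\mu(G)$. This is not a matter of tuning constants or of repricing odd components in the comparator $y^*$. The iterates' $y(V)$ is simply never capped, and vertex prices that fractionally cover $H$ hide the edges a larger integral matching needs.

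\textbf{What the paper does instead.} The missing idea is to let the stored $H$ cap the prices through its \emph{integral} matchings, via the budget constraint, rather than through a covering.
\begin{enumerate}
\item The paper does not update prices incrementally. Each round it recomputes $a$ as a $\zeta$-approximate minimizer of the strongly convex energy $E_H(y)=\|y\|_2^2+\beta\,r_y(H)$ over $\mathcal{P}_H=\{y\in[0,1]^V : C_H(y)\le B\}$, so prices may drop.
\item It searches for $Q$ only in $E\setminus H$.
\item Since $C_H(a)\le B$, Claim~\ref{claim:unseen} forces $r_a(N\setminus H)\ge\mu(G)-B\ge\delta\mu(G)$ for a maximum matching $N$ of $G$. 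In your example, this is exactly what keeps the bridges visible.
\item The Tutte--Berge construction of Lemma~\ref{lemma:upper} uses price $1$ on $U$, $0$ on small odd components, and $1/2$ elsewhere. It shows $\Phi_H\le B$ whenever $\mu(H)\le(1-\delta)B$. The residual inside small odd components is charged once per edge of $H$, and $\beta\binom{|S|}{2}\le(|S|-1)/2$ is absorbed by the slack $D$ that the budget already allows.
\item The midpoint strong-convexity argument of Lemma~\ref{lemma:growth} turns $r_a(Q)\ge 4\beta\mu(G)$ into a potential gain of at least $2\beta^2\mu(G)$ per pass. After $T=\lceil 1/(2\beta^2)\rceil$ passes this would push $\Phi_H$ above $B$, which is a contradiction.
\end{enumerate}
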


\subparagraph*{Parameters.}
We set
\begin{equation*}
    \beta := \frac{\delta}{24}, \qquad
    \zeta := \frac{\beta^2 B}{2} , \qquad
    T := \left\lceil \frac{1}{2 \beta^2} \right\rceil.
\end{equation*}


\subparagraph*{Energy and potential.}
For $H \subseteq E$ and $y \in [0,1]^V$, we define the \emph{energy} of $y$ on $H$ by
\begin{equation*}
    E_H(y) := \|y\|_2^2 + \beta \cdot r_y(H) .
\end{equation*}
Let $\mathcal{P}_H := \{ y \in [0,1]^V \mid C_H(y) \le B \}$ be the set of price vectors whose budget on $H$ is at most $B$, and define the \emph{potential} of $H$ by
\begin{equation*}
    \Phi_H := \min_{y \in \mathcal{P}_H} E_H(y).
\end{equation*}

The procedure maintains a subgraph $H \subseteq E$, which is initially empty.
In each iteration, it first checks whether $H$ already contains a matching of size more than $(1 - \delta) B$, and if so, returns a maximum matching of $H$ (line~\ref{line:probe_early}).
Otherwise, it computes a price vector $a \in \mathcal{P}_H$ with $E_H(a) \le \Phi_H + \zeta$ (line~\ref{line:probe_price}).
This step uses only the stored subgraph $H$ and does not access the stream, and it is implemented by exhaustive search on $H$ with $O(n)$ words of working space (Lemma~\ref{lemma:exhaustive} in Section~\ref{sec:exhaustive}).
It then makes one pass over the stream and computes, among the edges outside $H$, a constant-factor approximate maximum weight matching $Q$ with respect to the weights $r_e(a)$ (line~\ref{line:probe_pass}).
The matching $Q$ is added to $H$, and the procedure proceeds to the next iteration.
After $T$ iterations, it returns a maximum matching of $H$.
See Algorithm~\ref{alg:probe} for the pseudocode of $\textsc{Probe}(B, \delta)$.

To prove Proposition~\ref{prop:probe_main}, we first establish Lemmas~\ref{lemma:upper} and \ref{lemma:growth}, and then combine them to complete the proof.

\begin{algorithm}[t]
    \caption{$\textsc{Probe}(B, \delta)$} \label{alg:probe}
    \begin{algorithmic}[1]
        \Require A budget $B > 0$ and an accuracy parameter $\delta \in (0, 1)$.
        \State $H \gets \emptyset$.
        \For{$t = 1, 2, \ldots, T$}
            \If{$\mu(H) > (1 - \delta) B$} \label{line:probe_early}
                \State \Return a maximum matching of $H$.
            \EndIf
            \State Compute a price vector $a \in \mathcal{P}_H$ with $E_H(a) \le \Phi_H + \zeta$. \label{line:probe_price}
            \State In one pass, run the one-pass algorithm of Lemma~\ref{lemma:one_pass_weighted} on the edges $e \in E \setminus H$ with weights $w(e) := r_e(a)$, and let $Q$ be the output matching. \label{line:probe_pass}
            \State $H \gets H \cup Q$.
        \EndFor
        \State \Return a maximum matching of $H$.
    \end{algorithmic}
\end{algorithm}


\begin{lemma} \label{lemma:upper}
    Let $B > 0$, $\delta \in (0, 1)$, and $H \subseteq E$.
    If $\mu(H) \le (1 - \delta) B$, then there exists $y \in \mathcal{P}_H$ with $E_H(y) \le B$.
    In particular, $\Phi_H \le B$.
\end{lemma}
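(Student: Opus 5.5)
The plan is to build $y$ explicitly from a Tutte--Berge decomposition of $H$, in the spirit of a fractional vertex cover, and to use the slack $\mu(H)\le(1-\delta)B$ to pay for odd components. Apply Theorem~\ref{thm:tutte_berge} to $H$ to get $U\subseteq V$ with $\mu(H)=|U|+\sum_S\lfloor |S|/2\rfloor$ over the components $S$ of $H-U$. Call an odd component \emph{large} if $|S|\ge 2/\delta+1$ and \emph{small} otherwise. Define $y$ as follows:
\begin{itemize}
\item $y_v=1$ for $v\in U$;
\item $y_v=1/2$ for $v$ in an even component or a large odd component;
\item $y_v=0$ for $v$ in a small odd component, including singletons.
\end{itemize}
Clearly $y\in[0,1]^V$. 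Every edge of $H$ either meets $U$ or lies inside one component of $H-U$. Hence $r_e(y)=0$ for all edges of $H$ except those inside small odd components, where $r_e(y)=1$.

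\textbf{Budget.} By property~(a), $C_H(y)=y(V)+\max_M r_y(M)$. A matching $M\subseteq H$ gains residual only from edges inside small odd components, and at most $\lfloor |S|/2\rfloor$ of them in each such $S$. Therefore
\begin{equation*}
C_H(y)\le |U|+\sum_{S\ \text{even}}\frac{|S|}{2}+\sum_{S\ \text{large odd}}\frac{|S|}{2}+\sum_{S\ \text{small odd}}\Bigl\lfloor\frac{|S|}{2}\Bigr\rfloor .
\end{equation*}
Compared with $\mu(H)$, the only excess is $1/2$ per large odd component. For such $S$ we have $\lfloor |S|/2\rfloor\ge 1/\delta$, so the excess is at most $(\delta/2)\lfloor |S|/2\rfloor$. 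This gives $C_H(y)\le(1+\delta/2)\mu(H)\le(1+\delta/2)(1-\delta)B\le B$, so $y\in\mathcal P_H$.

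\textbf{Energy.} Since $y\in[0,1]^V$, we have $\|y\|_2^2\le y(V)$. Also $r_y(H)$ is the number of edges inside small odd components, which is at most $\sum_{S\ \text{small odd}}|S|^2/2$. For a small odd $S$ with $|S|\ge 3$, the bound $|S|\le 2/\delta+1\le 3/\delta$ gives $\beta|S|^2/2\le (\delta/24)(3/\delta)|S|/2=|S|/16\le\lfloor |S|/2\rfloor$. Singletons contribute nothing. Hence
\begin{equation*}
E_H(y)\le |U|+\sum_{S\ \text{even}}\frac{|S|}{2}+\sum_{S\ \text{large odd}}\frac{|S|}{2}+\sum_{S\ \text{small odd}}\Bigl\lfloor\frac{|S|}{2}\Bigr\rfloor ,
\end{equation*}
which is the same expression bounded above, so $E_H(y)\le B$. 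Finally $\Phi_H\le E_H(y)\le B$.

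The main obstacle is the treatment of odd components. Setting all their prices to $1/2$ overshoots the budget by $1/2$ per component, which is unaffordable for triangles. Setting prices to $0$ instead creates residual on every internal edge, and that residual must be charged through $\beta$. The size threshold around $2/\delta$ balances the two regimes, and checking that the constants work with $\beta=\delta/24$ is the step that needs care.
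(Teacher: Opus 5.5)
Your proposal is correct and follows essentially the same route as the paper. You use the same Tutte--Berge-based price vector (price $1$ on $U$, $0$ on small odd components, $1/2$ elsewhere), and you bound both $C_H(y)$ and $E_H(y)$ by the same quantity $|U|+\sum_{S\text{ not small}}|S|/2+\sum_{S\text{ small}}(|S|-1)/2$. The differences are only bookkeeping: your threshold $|S|\ge 2/\delta+1$ for large components gives the factor $(1+\delta/2)$ where the paper gets $(1+\delta)$, and you control $\beta\binom{|S|}{2}$ via $|S|\le 3/\delta$ rather than the paper's $\beta|S|\le 1$.
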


\begin{proof}
    By Theorem~\ref{thm:tutte_berge}, there exists $U \subseteq V$ such that $\mu(H) = |U| + \sum_{S \in \mathcal{C}} \lfloor |S|/2 \rfloor$, where $\mathcal{C}$ denotes the family of the vertex sets of the connected components of $H - U$.
    We call a component $S \in \mathcal{C}$ \emph{small} if $|S|$ is odd and $|S| \le 2/\delta$, and let $\mathcal{S} \subseteq \mathcal{C}$ be the family of small components.
    We define $y \in [0,1]^V$ by
    \begin{equation*}
        y_v :=
        \begin{cases}
            1   & \text{if } v \in U, \\
            0   & \text{if } v \in S \text{ for some } S \in \mathcal{S}, \\
            1/2 & \text{otherwise},
        \end{cases}
    \end{equation*}
    and we write
    \begin{equation*}
        A := y(V) = |U| + \sum_{S \in \mathcal{C} \setminus \mathcal{S}} \frac{|S|}{2}
        \qquad \text{and} \qquad
        D := \sum_{S \in \mathcal{S}} \frac{|S| - 1}{2} .
    \end{equation*}
    We will show that $A + D \le B$, and that both $C_H(y)$ and $E_H(y)$ are at most $A + D$.
    This proves the lemma: the former bound on $C_H(y)$ gives $C_H(y) \le B$, i.e., $y \in \mathcal{P}_H$, and the latter gives $E_H(y) \le B$, which also implies $\Phi_H \le B$ by the definition of $\Phi_H$.

    We first show that $A + D \le B$.
    Partition $\mathcal{C}$ into the family $\mathcal{C}_{\mathrm{even}}$ of the even components, the family $\mathcal{S}$ of the small components, and the family $\mathcal{L}$ of the remaining components, which are odd with $|S| > 2/\delta$.
    Since $\lfloor |S|/2 \rfloor = |S|/2$ if $|S|$ is even and $\lfloor |S|/2 \rfloor = (|S| - 1)/2$ if $|S|$ is odd, we have
    \begin{align*}
        \mu(H) &= |U| + \sum_{S \in \mathcal{C}_{\mathrm{even}}} \frac{|S|}{2} + \sum_{S \in \mathcal{L}} \frac{|S| - 1}{2} + \sum_{S \in \mathcal{S}} \frac{|S| - 1}{2} , \\
        A + D &= |U| + \sum_{S \in \mathcal{C}_{\mathrm{even}}} \frac{|S|}{2} + \sum_{S \in \mathcal{L}} \frac{|S|}{2} + \sum_{S \in \mathcal{S}} \frac{|S| - 1}{2} ,
    \end{align*}
    and hence $A + D - \mu(H) = |\mathcal{L}| / 2$.
    Every $S \in \mathcal{L}$ satisfies $|S| - 1 > 2/\delta - 1 \ge 1/\delta$ because $\delta \le 1$, and hence $1/2 < \delta \cdot (|S| - 1)/2$.
    Summing this over $S \in \mathcal{L}$, we obtain
    \begin{equation*}
        A + D - \mu(H) \ =\ \sum_{S \in \mathcal{L}} \frac{1}{2} \ \le\ \delta \sum_{S \in \mathcal{L}} \frac{|S| - 1}{2} \ \le\ \delta \cdot \mu(H) .
    \end{equation*}
    Therefore, $A + D \le (1 + \delta) \cdot \mu(H) \le (1 + \delta)(1 - \delta) B \le B$.

    Next, we compute the residual $r_y(e)$ of each edge $e \in H$ under the price vector $y$ defined above.
    These values are used to upper-bound $C_H(y)$ and $E_H(y)$.
    Every edge $e \in H$ either has an endpoint in $U$ or lies inside a component in $\mathcal{C}$.
    In the former case $y(e) \ge 1$ because $y_u = 1$ for $u \in U$, and inside a component that is not small $y(e) = 1/2 + 1/2 = 1$.
    Thus, $r_y(e) = 1$ if $e$ lies inside a small component, where $y(e) = 0$, and $r_y(e) = 0$ otherwise.
    In particular, for every $F \subseteq H$, $r_y(F)$ is the number of edges of $F$ inside small components.

    We now show that $C_H(y) \le A + D$, which implies $y \in \mathcal{P}_H$ since $A + D \le B$.
    By property~(a) of budgets, $C_H(y) = A + \max_M r_y(M)$, where the maximum is taken over all matchings $M \subseteq H$.
    Fix a matching $M \subseteq H$.
    Here, $r_y(M)$ is the number of edges of $M$ inside small components, that is,
    \begin{equation*}
        r_y(M) \ =\ \sum_{S \in \mathcal{S}} \bigl| \{ e \in M \mid e \subseteq S \} \bigr| .
    \end{equation*}
    Since the edges of $M$ inside $S$ are pairwise disjoint and $|S|$ is odd, there are at most $(|S| - 1)/2$ of them, and hence $r_y(M) \le \sum_{S \in \mathcal{S}} (|S| - 1)/2 = D$.
    Therefore, $C_H(y) \le A + D$.

    Finally, we show that $E_H(y) \le A + D$, where $E_H(y) = \|y\|_2^2 + \beta \, r_y(H)$.
    For the first term, since $y_v \in [0, 1]$ implies $y_v^2 \le y_v$ for every $v \in V$, we have $\|y\|_2^2 \le y(V) = A$.
    For the second term, recall that $r_y(H)$ is the number of edges of $H$ inside small components, i.e.,
    \begin{equation*}
        r_y(H) \ =\ \sum_{S \in \mathcal{S}} \bigl| \{ e \in H \mid e \subseteq S \} \bigr| .
    \end{equation*}
    Here, the number of edges inside $S$ is at most $\binom{|S|}{2} = |S| \cdot \frac{|S| - 1}{2}$.
    Moreover, every small component satisfies $|S| \le 2/\delta$, and hence $\beta |S| \le 2\beta/\delta \le 1$ since $\beta = \delta/24$.
    Thus,
    \begin{equation*}
        \beta \, r_y(H) \ \le\ \sum_{S \in \mathcal{S}} \beta \binom{|S|}{2} \ =\ \sum_{S \in \mathcal{S}} \beta |S| \cdot \frac{|S| - 1}{2} \ \le\ \sum_{S \in \mathcal{S}} \frac{|S| - 1}{2} \ =\ D .
    \end{equation*}
    Therefore, $E_H(y) \le A + D \le B$, which completes the proof.
\end{proof}


\begin{lemma} \label{lemma:growth}
    Let $B > 0$, $\delta \in (0, 1)$, and $H \subseteq E$.
    Let $a \in \mathcal{P}_H$ be a price vector with $E_H(a) \le \Phi_H + \zeta$, and let $Q \subseteq E \setminus H$ be a matching such that $\mathcal{P}_{H \cup Q} \neq \emptyset$.
    Then, we have
    \begin{equation*}
        \Phi_{H \cup Q} \ \ge\ \Phi_H + \beta \, r_a(Q) - \beta^2 |Q| - 2 \zeta .
    \end{equation*}
\end{lemma}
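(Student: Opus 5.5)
The plan is to compare the potential of $H \cup Q$ with that of $H$ through an exact minimizer for $H\cup Q$, and to control how far that minimizer is from $a$ using strong convexity of $E_H$.

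\textbf{Setup.} Since $\mathcal{P}_{H\cup Q}\neq\emptyset$, and $\mathcal{P}_{H\cup Q}$ is compact ($C_{H\cup Q}$ is continuous as a maximum of finitely many affine functions), while $E_{H\cup Q}$ is continuous, there is a minimizer $b\in\mathcal{P}_{H\cup Q}$ with $E_{H\cup Q}(b)=\Phi_{H\cup Q}$. By property~(b), $C_H(b)\le C_{H\cup Q}(b)\le B$, so $b\in\mathcal{P}_H$. Because $Q\cap H=\emptyset$, we have the exact decomposition
\[
\Phi_{H\cup Q}=E_{H\cup Q}(b)=E_H(b)+\beta\, r_b(Q).
\]
Set $\Delta:=E_H(b)-\Phi_H\ge 0$. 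Then $\Phi_{H\cup Q}=\Phi_H+\Delta+\beta r_b(Q)$. It remains to show that $r_b(Q)$ is not much smaller than $r_a(Q)$, unless $\Delta$ is large.

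\textbf{Closeness of $a$ and $b$ via strong convexity.} The set $\mathcal{P}_H$ is convex, since it is a sublevel set of the convex function $C_H$ intersected with the box $[0,1]^V$. Each $r_e(y)=\max\{1-y(e),0\}$ is convex, so $E_H$ is $\|y\|_2^2$ plus a convex function. Hence for $m=(a+b)/2\in\mathcal{P}_H$ the parallelogram identity gives
\[
\Phi_H\le E_H(m)\le \tfrac12\bigl(E_H(a)+E_H(b)\bigr)-\tfrac14\|a-b\|_2^2 .
\]
Combining this with $E_H(a)\le\Phi_H+\zeta$ and $E_H(b)=\Phi_H+\Delta$ yields
\[
\|a-b\|_2^2\le 2(\zeta+\Delta).
\]

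\textbf{Transfer to residuals on $Q$.} Each $r_e$ is $1$-Lipschitz in $y(e)$, so
\[
r_b(Q)\ge r_a(Q)-\sum_{e\in Q}|a(e)-b(e)|\ge r_a(Q)-\sum_{v\in V(Q)}|a_v-b_v|.
\]
The second inequality uses that $Q$ is a matching, so each vertex of $V(Q)$ lies in exactly one edge of $Q$. Since $|V(Q)|=2|Q|$, Cauchy--Schwarz bounds the last sum by $\sqrt{2|Q|}\,\|a-b\|_2\le 2\sqrt{|Q|(\zeta+\Delta)}$. Then AM--GM gives
\[
\beta\cdot 2\sqrt{|Q|}\sqrt{\zeta+\Delta}\le \beta^2|Q|+\zeta+\Delta .
\]
Plugging in, we obtain
\[
\Phi_{H\cup Q}\ge \Phi_H+\Delta+\beta r_a(Q)-\beta^2|Q|-\zeta-\Delta=\Phi_H+\beta r_a(Q)-\beta^2|Q|-\zeta,
\]
which is even stronger than the claimed bound with $2\zeta$.

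The main step is the strong-convexity argument. One must check that $\mathcal{P}_H$ is convex, so that the midpoint is feasible, and that the $\Delta$ terms cancel exactly in the AM--GM step; the constants are chosen precisely so that they do.
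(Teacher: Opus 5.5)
Your proof is correct and follows essentially the paper's route. You take an exact minimizer $b$ for $H\cup Q$, note $b\in\mathcal{P}_H$ and $\Phi_{H\cup Q}=E_H(b)+\beta\, r_b(Q)$, and bound $\|a-b\|_2^2$ via the midpoint $(a+b)/2$ and the parallelogram identity. You then trade this quadratic term against the linear residual loss on $V(Q)$ globally via Cauchy--Schwarz and AM--GM, which is equivalent to the paper's edge-by-edge completion of the square; and because you compare $E_H(b)$ directly with $\Phi_H$ rather than with $E_H(a)$, as the paper's stability claim does, you avoid one loss of $\zeta$ and obtain the slightly sharper bound with $-\zeta$ in place of $-2\zeta$.
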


\begin{proof}
    We first show the following claim.

    \begin{claim} \label{claim:stable}
        For every $b \in \mathcal{P}_H$, we have $E_H(b) \ge E_H(a) + \frac{1}{2} \|b - a\|_2^2 - 2 \zeta$.
    \end{claim}

    \begin{proof}[Proof of Claim~\ref{claim:stable}]
        Let $m := (a + b)/2$.
        Then, $m \in \mathcal{P}_H$, because $\mathcal{P}_H$ is convex.
        Since each $r_e$ is convex, the function $E_H(y) - \|y\|_2^2 = \beta \cdot r_y(H)$ is convex, and we have $\|m\|_2^2 = \frac{1}{2}\|a\|_2^2 + \frac{1}{2}\|b\|_2^2 - \frac{1}{4}\|b - a\|_2^2$.
        Thus,
        \begin{align*}
            E_H(a) - \zeta \ \le\ \Phi_H \ \le\ E_H(m)
            &=\ \|m\|_2^2 + \beta \, r_m(H) \\
            &=\ \frac{1}{2} \|a\|_2^2 + \frac{1}{2} \|b\|_2^2 - \frac{1}{4} \|b - a\|_2^2 + \beta \, r_m(H) \\
            &\le\ \frac{1}{2} \|a\|_2^2 + \frac{1}{2} \|b\|_2^2 - \frac{1}{4} \|b - a\|_2^2 + \frac{\beta}{2} \, r_a(H) + \frac{\beta}{2} \, r_b(H) \\
            &=\ \frac{1}{2} E_H(a) + \frac{1}{2} E_H(b) - \frac{1}{4} \|b - a\|_2^2 ,
        \end{align*}
        where the third inequality holds by the convexity of $y \mapsto r_y(e)$ for each $e \in H$.
        Rearranging the terms completes the proof of the claim.
    \end{proof}

    Let $b \in \mathcal{P}_{H \cup Q}$ be a price vector attaining $\Phi_{H \cup Q}$.
    Since $Q \cap H = \emptyset$, we have $r_b(H \cup Q) = r_b(H) + r_b(Q)$, and hence
    \begin{equation*}
        \Phi_{H \cup Q} \ =\ E_{H \cup Q}(b) = \|b\|_2^2 + \beta \cdot (r_b(H) + r_b(Q)) =\ E_H(b) + \beta \, r_b(Q) .
    \end{equation*}
    Since $C_H \le C_{H \cup Q}$, we have $\mathcal{P}_{H \cup Q} \subseteq \mathcal{P}_H$, and hence Claim~\ref{claim:stable} applies to $b \in \mathcal{P}_H$, which gives
    \begin{equation*}
        \Phi_{H \cup Q} = E_H(b) + \beta r_b(Q) \ge\ E_H(a) + \frac{1}{2} \|b - a\|_2^2 - 2\zeta + \beta \, r_b(Q) .
    \end{equation*}

    Since $E_H(a) \ge \Phi_H$, it remains to show that
    \begin{equation*}
        \frac{1}{2} \|b - a\|_2^2 + \beta \, r_b(Q) \ \ge\ \beta \, r_a(Q) - \beta^2 |Q| .
    \end{equation*}
    Write $\Delta_v := b_v - a_v$ for $v \in V$.
    Since $Q$ is a matching, each vertex of $V(Q)$ belongs to exactly one edge of $Q$, and hence
    \begin{equation*}
        \|b - a\|_2^2 \ =\ \sum_{v \in V} \Delta_v^2 \ \ge\ \sum_{v \in V(Q)} \Delta_v^2 \ =\ \sum_{\{u, v\} \in Q} \left( \Delta_u^2 + \Delta_v^2 \right) .
    \end{equation*}
    Therefore, it suffices to show that every edge $e = \{u, v\} \in Q$ satisfies
    \begin{equation*}
        \frac{1}{2} \left( \Delta_u^2 + \Delta_v^2 \right) + \beta \, r_b(e) \ \ge\ \beta \, r_a(e) - \beta^2 .
    \end{equation*}
    
    If $r_e(a) = 0$, this holds because the left-hand side is nonnegative.
    Otherwise, $r_e(a) = 1 - a(e)$, and $r_b(e) \ =\ \max\{ 1 - b(e), 0 \} \ \ge\ 1 - b(e) \ =\ 1 - a(e) - \Delta_u - \Delta_v \ =\ r_a(e) - \Delta_u - \Delta_v$.
    Hence,
    \begin{align*}
        \frac{1}{2} \left( \Delta_u^2 + \Delta_v^2 \right) + \beta \, r_b(e)
        &\ge\ \frac{1}{2} \Delta_u^2 + \frac{1}{2} \Delta_v^2 + \beta \, r_a(e) - \beta \Delta_u - \beta \Delta_v \\
        &=\ \beta \, r_a(e) + \left( \frac{1}{2} \Delta_u^2 - \beta \Delta_u \right) + \left( \frac{1}{2} \Delta_v^2 - \beta \Delta_v \right) \\
        &\ge\ \beta \, r_a(e) - \frac{1}{2} \beta^2 - \frac{1}{2} \beta^2
        \ =\ \beta \, r_a(e) - \beta^2 ,
    \end{align*}
    where the last inequality holds because $\frac{1}{2} \Delta^2 - \beta \Delta = \frac{1}{2} (\Delta - \beta)^2 - \frac{1}{2} \beta^2 \ge -\frac{1}{2} \beta^2$ for every real $\Delta$.
    This completes the proof.
\end{proof}

We are now ready to prove Proposition~\ref{prop:probe_main}.

\begin{proof}[Proof of Proposition~\ref{prop:probe_main}]
    
    Suppose that $B \le (1 - \delta)\mu(G)$.
    Suppose, for a contradiction, that $\textsc{Probe}(B, \delta)$ returns a matching of size at most $(1 - \delta) \cdot B$.
    The returned matching is a maximum matching of the final subgraph $H$, and $H$ only grows during the execution.
    Hence, $\mu(H) \le (1 - \delta) B$ holds at every moment of the execution.
    For $t = 1, 2, \ldots, T + 1$, let $H_t$ be the subgraph $H$ at the beginning of the $t$-th iteration,\footnote{For $t = T + 1$, this means the subgraph $H$ at the end of the $T$-th iteration.} so that $H_1 = \emptyset$ and $H_{t+1} = H_t \cup Q_t$, where $a_t$ and $Q_t$ denote the price vector $a$ and the matching $Q$ in the $t$-th iteration.
    By Lemma~\ref{lemma:upper}, we have $\mathcal{P}_{H_t} \neq \emptyset$ and $\Phi_{H_t} \le B$ for every $t = 1, 2, \ldots, T + 1$. 

    Now, we show that the potential grows by at least $2 \beta^2 \cdot \mu(G)$ in every iteration.
    To lower-bound $r_{a_t}(Q_t)$, we first show the following claim.

    \begin{claim} \label{claim:unseen}
        For every $H \subseteq E$, every $y \in [0,1]^V$, and every matching $N \subseteq E$, we have $r_y(N \setminus H) \ge |N| - C_H(y)$.
    \end{claim}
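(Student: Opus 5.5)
The plan is to compare $N$ with a cleverly chosen matching inside $H$ and apply the definition of the budget $C_H(y)$. Split $N$ into $N_1 := N \cap H$ and $N_2 := N \setminus H$. Since $N_1 \subseteq H$ is a matching, it is a feasible candidate in the maximum defining $C_H(y)$. Hence
\begin{equation*}
    C_H(y) \ \ge\ |N_1| + y(V \setminus V(N_1)).
\end{equation*}

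Next, lower-bound $y(V \setminus V(N_1))$ using only the vertices covered by $N_2$. The edges of $N_2$ are vertex-disjoint from those of $N_1$, because $N$ is a matching. So $V(N_2) \subseteq V \setminus V(N_1)$, and since $y \ge 0$,
\begin{equation*}
    y(V \setminus V(N_1)) \ \ge\ y(V(N_2)) \ =\ \sum_{e \in N_2} y(e).
\end{equation*}

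For each edge $e \in N_2$, the definition of the residual gives $r_e(y) \ge 1 - y(e)$, that is, $y(e) \ge 1 - r_e(y)$. Summing over $e \in N_2$ gives $y(V(N_2)) \ge |N_2| - r_y(N_2)$. Combining the three inequalities yields
\begin{equation*}
    C_H(y) \ \ge\ |N_1| + |N_2| - r_y(N \setminus H) \ =\ |N| - r_y(N \setminus H),
\end{equation*}
which rearranges to the claim.

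There is essentially no obstacle here. The only points needing care are that $y \ge 0$, which holds because $y \in [0,1]^V$ and lets us drop the vertices outside $V(N_1) \cup V(N_2)$, and that $N_1$ and $N_2$ cover disjoint vertex sets. Equivalently, one can use property (a): $C_H(y) \ge y(V) + r_y(N_1)$ combined with the bound $y(V) \ge \sum_{e \in N} y(e)$.
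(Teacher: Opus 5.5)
Your proposal is correct and follows essentially the same route as the paper: apply the definition of $C_H(y)$ to the matching $N \cap H$, lower-bound the uncovered-vertex term by $\sum_{e \in N \setminus H} y(e)$ using $y \ge 0$ and the vertex-disjointness of $N$, and finish with $1 - y(e) \le r_e(y)$.
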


    \begin{proof}[Proof of Claim~\ref{claim:unseen}]
        Apply the definition of $C_H(y)$ to the matching $M := N \cap H$ of $H$, which gives $C_H(y) \ge |N \cap H| + y(V \setminus V(M))$.
        Since $N$ is a matching, the endpoints of the edges of $N \setminus H$ are distinct and are not covered by $M$, and hence $y(V \setminus V(M)) \ge \sum_{e \in N \setminus H} y(e)$ because $y \ge 0$.
        Therefore,
        \begin{equation*}
            C_H(y) \ \ge\ |N \cap H| + \sum_{e \in N \setminus H} y(e) \ =\ |N| - \sum_{e \in N \setminus H} \bigl( 1 - y(e) \bigr) \ \ge\ |N| - r_y(N \setminus H) ,
        \end{equation*}
        where the last inequality holds because $1 - y(e) \le r_y(e)$.
    \end{proof}

    Fix $t \in \{1, 2, \ldots, T\}$ and let $N$ be a maximum matching of $G$.
    Since $a_t \in \mathcal{P}_{H_t}$, we have $C_{H_t}(a_t) \le B$, and Claim~\ref{claim:unseen} yields
    \begin{equation*}
        r_{a_t}(N \setminus H_t) \ge |N| - C_{H_t}(a_t) \ge \mu(G) - B \ \ge\ \delta \cdot \mu(G) .
    \end{equation*}
    Since $N \setminus H_t$ is a matching consisting of edges in $E \setminus H_t$, and the matching $Q_t$ found in Line~\ref{line:probe_pass} is a $6$-approximate maximum weight matching of $E \setminus H_t$ with respect to the weights $r_{a_t}$ by Lemma~\ref{lemma:one_pass_weighted}, we have
    \begin{equation*}
        r_{a_t}(Q_t) \ \ge\ \frac{1}{6} \, r_{a_t}(N \setminus H_t) \ \ge\ \frac{\delta \cdot \mu(G)}{6} \ =\ 4 \beta \cdot \mu(G),
    \end{equation*}
    where the last equality holds because $\beta = \delta / 24$.
    We apply Lemma~\ref{lemma:growth} to $H_t$, $a_t$, and $Q_t$.
    Now, $\mathcal{P}_{H_{t+1}} \neq \emptyset$ holds as shown above.
    Using $|Q_t| \le \mu(G)$ and $2\zeta = \beta^2 B \le \beta^2 \cdot \mu(G)$, we obtain
    \begin{equation*}
        \Phi_{H_{t+1}} - \Phi_{H_t}
        \ \ge\ \beta \, r_{a_t}(Q_t) - \beta^2 |Q_t| - 2 \zeta
        \ \ge\ 4 \beta^2 \cdot \mu(G) - \beta^2 \cdot \mu(G) - \beta^2 \cdot \mu(G)
        \ =\ 2 \beta^2 \cdot \mu(G) .
    \end{equation*}

    Summing the above inequality over $t = 1, 2, \ldots, T$ and using $\Phi_{H_1} = \Phi_{\emptyset} = 0$, we obtain
    \begin{equation*}
        \Phi_{H_{T + 1}} \ \ge\ 2 \beta^2 T \cdot \mu(G) \ \ge\ \mu(G) \ >\ B ,
    \end{equation*}
    where the second inequality holds because $T \ge 1 / (2 \beta^2)$, and the last inequality holds because $B \le (1 - \delta) \cdot \mu(G) < \mu(G)$.
    This contradicts $\Phi_{H_{T+1}} \le B$, which completes the proof.
\end{proof}

\subsection{Whole Algorithm} \label{sec:whole}


In this section, we prove Theorem~\ref{thm:main}.
The overall algorithm performs a binary search for $\mu(G)$, using $\textsc{Probe}$ (Algorithm~\ref{alg:probe}) as the test.
See Algorithm~\ref{alg:whole} for the pseudocode.

\begin{algorithm}[t]
    \caption{Whole algorithm} \label{alg:whole}
    \begin{algorithmic}[1]
        \State In one pass, compute a maximal matching $M_0$ of $G$ greedily, and let $s := |M_0|$.
        \State $L \gets s$, $U \gets 2s$, and $N^* \gets M_0$.
        \While{$U - L > \eps s$} 
            \State $B \gets (L + U) / 2$. 
            \State $w \gets U - L$ and $\ds \delta \gets \frac{w}{10s}$.
            \State $N \gets \textsc{Probe}(B, \delta)$, and discard the subgraph stored by $\textsc{Probe}$. \label{line:whole_probe}
            \If{$|N| > (1 - \delta) B$}
                \State $L \gets |N|$ and $N^* \gets N$. 
            \Else
                \State $U \gets \lfloor B / (1 - \delta) \rfloor$. 
            \EndIf
        \EndWhile
        \State \Return $N^*$.
    \end{algorithmic}
\end{algorithm}

\main*

\begin{proof}[Proof of Theorem~\ref{thm:main}]
    We prove that Algorithm~\ref{alg:whole} returns a matching of size at least $(1 - \eps)\cdot\mu(G)$ in $O(1/\eps^2)$ passes using $O(n/\eps^2)$ words of space.
    We first prove the following claim.

    \begin{claim} \label{claim:invariant}
        Throughout the execution of Algorithm~\ref{alg:whole}, $N^*$ is a matching of $G$ with $|N^*| = L$, and $s \le L \le \mu(G) \le U \le 2s$.
        Moreover, each iteration of the while loop replaces the width $w = U - L$ by a width at most $\frac{9}{10} w$.
    \end{claim}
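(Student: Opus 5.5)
The plan is to prove Claim~\ref{claim:invariant} by induction on the iterations of the while loop. First I check the initial state, then the two branches of the update, and finally the width reduction.

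\emph{Initial state.} After line~2, $N^* = M_0$ and $L = s = |M_0|$. Since $M_0$ is a maximal matching, $s \le \mu(G) \le 2s$, so $U = 2s$ is a valid upper bound.

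\emph{Inductive step.} Fix an iteration and assume $s \le L \le \mu(G) \le U \le 2s$. Set $w = U - L$. Inside the loop $w > \eps s > 0$, and since $w \le s$ we get $\delta = w/(10s) \in (0, 1/10]$, so Proposition~\ref{prop:probe_main} applies. There are two cases.

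\textbf{Success branch.} Here $|N| > (1-\delta)B$ and $N$ is a matching of $G$ (a maximum matching of $H \subseteq E$). The new value is $L' = |N|$, so $|N^*| = L'$ and $L' \le \mu(G)$. For the lower bound I will show $L' \ge L$. We have
\[
(1-\delta)B = B - \frac{w}{10s}\cdot\frac{L+U}{2} \ge B - \frac{w}{10s}\cdot 2s = L + \frac{w}{2} - \frac{w}{5} = L + \frac{3w}{10},
\]
using $B \le U \le 2s$. Hence $L' > L + 3w/10$, which gives both $L' \ge L \ge s$ and the new width $U - L' < \frac{7}{10}w$.

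\textbf{Failure branch.} Here $|N| \le (1-\delta)B$. By the contrapositive of Proposition~\ref{prop:probe_main}, $B > (1-\delta)\mu(G)$, i.e., $\mu(G) < B/(1-\delta)$. Since $\mu(G)$ is an integer, $\mu(G) \le \lfloor B/(1-\delta) \rfloor = U'$. Then $L \le \mu(G) \le U'$ holds. The remaining requirements are $U' \le 2s$ and a width bound, and for both I need an upper bound on $U'$. Using $1/(1-\delta) \le 1 + \delta/(1-\delta) \le 1 + \tfrac{10}{9}\delta$ for $\delta \le 1/10$ and $B \le 2s$,
\[
\frac{B}{1-\delta} \le B + \frac{10}{9}\cdot\frac{w}{10s}\cdot 2s = B + \frac{2w}{9}.
\]
Therefore $U' - L \le \frac{w}{2} + \frac{2w}{9} = \frac{13}{18}w \le \frac{9}{10}w$, and $U' \le U$ because $\frac{13}{18}w < w$. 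So $U' \le 2s$ is preserved.

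The main obstacle is making the constants work in these two bounds. Both losses, $\delta B \le w/5$ and $B\delta/(1-\delta) \le 2w/9$, have to be small compared with the half-width $w/2$, and this is exactly why $\delta$ was chosen as $w/(10s)$ with $B \le 2s$. The floor in the update and the integrality of $\mu(G)$ and $|N|$ only help, since $U$ moves down and $L$ becomes the integer $|N|$. Combining the two branches with the initial state completes the induction.
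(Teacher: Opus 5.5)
Your proof is correct and follows the paper's argument essentially step for step. You use the same induction, the same bound $\delta B \le w/5$ in the success branch (giving width below $\frac{7}{10}w$), and the same contrapositive of Proposition~\ref{prop:probe_main} with integrality of $\mu(G)$ in the failure branch. The only difference is cosmetic: you use the tighter $1/(1-\delta) \le 1 + \frac{10}{9}\delta$ for $\delta \le 1/10$ to get width at most $\frac{13}{18}w$, whereas the paper uses $1/(1-\delta) \le 1 + 2\delta$ to get $\frac{9}{10}w$.
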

    
    \begin{proof}[Proof of Claim~\ref{claim:invariant}]
        We proceed by induction on the number of iterations.
        Initially, $|N^*| = |M_0| = s = L$, and $s \le \mu \le 2s$ holds because a maximal matching has at least half the size of a maximum matching.
        Consider an iteration that starts with $L$ and $U$ satisfying $s \le L \le \mu(G) \le U \le 2s$, and let $w = U - L$.
        Let $L'$ and $U'$ denote the values of $L$ and $U$ after the iteration, respectively.
        Then, $w \le s$, and hence $\delta = w / (10 s) \le 1/10$.
        Since $B = L + \frac{w}{2} \le U \le 2s$, we have
        \begin{equation*}
            \delta B \ \le\ \delta U \ \le\ \frac{2 w}{10} \ =\ \frac{w}{5} .
        \end{equation*}
    
        Suppose that $|N| > (1 - \delta) B$ holds in the iteration.
        Then, the algorithm updates $L$ to $|N|$ and $N^*$ to $N$, so $L' = |N|$ and $U' = U$.
        Since $N$ is a matching of $G$, we have $L' \le \mu(G)$.
        Moreover, $L' = |N| > B - \delta B \ge L + \frac{w}{2} - \frac{w}{5}$, and hence the new width is $U' - L' < \frac{7}{10} w$.
        
        Suppose instead that $|N| \le (1 - \delta) B$ holds in the iteration.
        Then, by Proposition~\ref{prop:probe_main}, we have $B > (1 - \delta)\cdot\mu(G)$, i.e., $\mu(G) < B / (1 - \delta)$.
        The algorithm updates $U$ to $\lfloor B / (1 - \delta) \rfloor$, so $L' = L$ and $U' = \lfloor B / (1 - \delta) \rfloor$, and we have $\mu(G) \le U'$.
        Moreover, since $1 / (1 - \delta) \le 1 + 2\delta$ for $\delta \le 1/2$, we have $U' \le B + 2 \delta B \le L' + \frac{w}{2} + \frac{2 w}{5}$, and hence the new width is $U' - L' \le \frac{9}{10} w$.
        
        In both cases, the new interval $[L', U']$ is contained in the old one $[L, U]$ and still contains $\mu(G)$, and $|N^*| = L$ is maintained.
        This completes the induction and the proof of the claim.
    \end{proof}

    \subparagraph*{Approximation.}
    The while loop terminates because $U - L$ strictly decreases in every iteration.
    When it terminates, $U - L \le \eps s$, and hence, by Claim~\ref{claim:invariant}, the output satisfies $|N^*| = L \ge U - \eps s \ge \mu(G) - \eps \mu(G)$.

    \subparagraph*{Passes.}
    Let $w_0 > w_1 > \cdots > w_{k-1}$ be the widths at the beginnings of the $k$ iterations of the while loop.
    By Claim~\ref{claim:invariant}, we have $w_i \ge (10/9)^{k-1-i} w_{k-1}$ for every $i$, and the condition of the while loop gives $w_{k-1} > \eps s$, where we note that $w_0 = s$.
    By Proposition~\ref{prop:probe_main}, the $i$-th iteration makes $O(1/\delta_i^2) = O(s^2 / w_i^2)$ passes, where $\delta_i :=  \frac{w_i}{10 s}$.
    Therefore, the total number of passes is
    \begin{equation*}
        1 + \sum_{i = 0}^{k-1} O\!\left( \frac{s^2}{w_i^2} \right)
        \ =\ O\!\left( \frac{s^2}{w_{k-1}^2} \sum_{i = 0}^{k-1} \left( \frac{81}{100} \right)^{k - 1 - i} \right)
        \ =\ O\!\left( \frac{s^2}{w_{k-1}^2} \right)
        \ =\ O\!\left( \frac{1}{\eps^2} \right) .
    \end{equation*}

    \subparagraph*{Space.}
    Consider the $i$-th iteration.
    Since $w_i > \eps s$, we have $\delta_i > \frac{\eps}{10}$, and hence, by Proposition~\ref{prop:probe_main}, the subgraph stored by $\textsc{Probe}(B, \delta_i)$ has $O(n / \delta_i^2) = O(n/\eps^2)$ edges.
    Note that it is discarded before the next iteration.
    Besides the stored subgraph, the algorithm keeps the matchings $N^*$, $N$, and $Q$, the price vector $a$, which take $O(n)$ words in total.
    \tatsuya{By Lemma~\ref{lemma:exhaustive}, every number that the algorithm stores fits in one word of $O(\log(n/\eps))$ bits.}
    Therefore, the algorithm uses $O(n/\eps^2)$ words of space, which completes the proof.
\end{proof}

\subsection{Computing Prices by Exhaustive Search} \label{sec:exhaustive}

It remains to show how Line~\ref{line:probe_price} of $\textsc{Probe}$ is implemented in small space.
We implement these steps by exhaustive search. 
Here, all computation is done on the stored subgraph $H$ without accessing the stream.

\begin{lemma} \label{lemma:exhaustive}
    Let $H \subseteq E$ with $\mu(H) \le B$.
    Then, a price vector $a \in \mathcal{P}_H$ with $E_H(a) \le \Phi_H + \zeta$ can be computed from $H$ using $O(n)$ words of working space.
\end{lemma}

\begin{proof}
    Let $m := |H|$ and $W := \max \{ \lceil 2 \beta m / \zeta \rceil, 1 \}$, and let $\Gamma_W := \{ 0, 1/W, 2/W, \ldots, 1 \}^V$.
    The algorithm enumerates all points $y \in \Gamma_W$, and among those with $C_H(y) \le B$, returns a point $a$ minimizing $E_H(y)$.

    Let $x \in \mathcal{P}_H$ be a price vector attaining $\Phi_H$, and define $z \in [0,1]^V$ by $z_v := \lfloor W x_v \rfloor / W$ for $v \in V$.
    Then, $z \in \Gamma_W$ and $0 \le z \le x$, and hence $C_H(z) \le C_H(x) \le B$ because $C_H$ is nondecreasing in each coordinate (property~(b)).
    Thus, $z$ satisfies the constraint, so $a$ exists, and $a \in \mathcal{P}_H$ because $a \in \Gamma_W \subseteq [0,1]^V$ and $C_H(a) \le B$.
    By the minimality of $a$, we have $E_H(a) \le E_H(z)$.
    Moreover, $\|z\|_2^2 \le \|x\|_2^2$ since $0 \le z \le x$, and $z(e) \ge x(e) - 2/W$ implies $r_z(e) \le r_x(e) + 2/W$ for every edge $e$.
    Therefore,
    \begin{equation*}
        E_H(a) \ \le\ E_H(z) \ \le\ \|x\|_2^2 + \beta \left( r_x(H) + \frac{2m}{W} \right) \ =\ \Phi_H + \frac{2 \beta m}{W} \ \le\ \Phi_H + \zeta ,
    \end{equation*}
    where the last inequality holds by the choice of $W$.
\end{proof}

\section{Further Results} \label{sec:further}

Our approach extends in several directions.
We only state the results here and defer the details to a forthcoming manuscript.
\begin{itemize}
    \item \textbf{Weighted matching.}
    Our approach yields a semi-streaming algorithm that computes a $(1 - \eps)$-approximate maximum weight matching in $O(\log(1/\eps) / \eps^2)$ passes.
    \item \textbf{Massively Parallel Computation.}
    Our approach also works in the Massively Parallel Computation (MPC) model.
    \item \textbf{Running time.}
    The running time of the algorithm in this paper is exponential in $n$ and $1/\eps$ because of the exhaustive search in Section~\ref{sec:exhaustive}.
    It can be made polynomial in $n$ and $1/\eps$ while keeping the same pass and space bounds.
\end{itemize}

\paragraph*{Acknowledgment}
A manuscript containing the results in Section~\ref{sec:further}, together with their full proofs, will be made publicly available shortly.

\paragraph*{Declaration of Generative AI Use}
The main ideas behind our budgeted price approach and its analysis were developed largely with the assistance of GPT-6 Astra (OpenAI).
We also used GPT-6 Astra (OpenAI) and Claude Fable 5.1 (Anthropic) to assist in drafting the manuscript and improving its language and presentation.
We carefully verified and revised the proofs and take full responsibility for the paper.

\bibliography{biblio}

@InProceedings{mitrovic_et_al:LIPIcs.ICALP.2025.119,
  author =	{Mitrovi\'{c}, Slobodan and Mukherjee, Anish and Sankowski, Piotr and Sheu, Wen-Horng},
  title =	{Faster Semi-Streaming Matchings via Alternating Trees},
  booktitle =	{Proceedings of  the 52nd International Colloquium on Automata, Languages, and Programming (ICALP 2025)},
  pages =	{119:1--119:19},
  year =	{2025},
  volume =	{334},
  doi =		{10.4230/LIPIcs.ICALP.2025.119},
}

@article{izumi2025forgetting,
  title={Forgetting Alternation and Blossoms: A New Framework for Fast Matching Augmentation and Its Applications to Sequential/Distributed/Streaming Computation},
  author={Izumi, Taisuke and Kitamura, Naoki and Yamaguchi, Yutaro},
  journal={arXiv preprint arXiv:2511.08210},
  year={2025}
}

@inproceedings{fischer2022deterministic,
  title={Deterministic {$(1+ \varepsilon)$}-approximate maximum matching with poly {$(1/\varepsilon)$} passes in the semi-streaming model and beyond},
  author={Fischer, Manuela and Mitrovi{\'c}, Slobodan and Uitto, Jara},
  booktitle={Proceedings of  of the 54th Annual ACM SIGACT Symposium on Theory of Computing (STOC 2022)},
  pages={248--260},
  year={2022}
}

@article{assadi2025simple,
  title={A Simple {$(1-\varepsilon)$}-Approximation Semi-Streaming Algorithm for Maximum (Weighted) Matching},
  author={Assadi, Sepehr},
  journal={TheoretiCS},
  volume={4},
  year={2025},
  publisher={Episciences. org},
  note={Announced at SOSA'24.}
}

@article{feigenbaum2005graph,
  title = {On graph problems in a semi-streaming model},
  author = {Feigenbaum, Joan and Kannan, Sampath and McGregor, Andrew and Suri, Siddharth and Zhang, Jian},
  journal = {Theoretical Computer Science},
  volume = {348},
  number = {2-3},
  pages = {207--216},
  year = {2005},
  publisher = {Elsevier},
  note = {Announced at ICALP 2004.},
  timestamp = {Wed, 17 Feb 2021 00:00:00 +0100},
  biburl = {https://dblp.org/rec/journals/tcs/FeigenbaumKMSZ05.bib},
  bibsource = {dblp computer science bibliography, https://dblp.org},
  doi = {10.1016/j.tcs.2005.09.013},
  _bib2doi_selected = {dblp:/rec/journals/tcs/FeigenbaumKMSZ05.bib},
  _bib2doi_confirmed = {true},
}

@inproceedings{mcgregor2005finding,
  title = {Finding graph matchings in data streams},
  author = {McGregor, Andrew},
  booktitle = {Proceedings of the 8th International Workshop on Approximation Algorithms for Combinatorial Optimization Problems (APPROX 2005) and the 9th International Workshop on Randomized and Computation (RANDOM 2005)},
  pages = {170--181},
  year = {2005},
  organization = {Springer},
  timestamp = {Tue, 23 May 2017 01:00:00 +0200},
  biburl = {https://dblp.org/rec/conf/approx/McGregor05.bib},
  bibsource = {dblp computer science bibliography, https://dblp.org},
  doi = {10.1007/11538462_15},
  _bib2doi_selected = {dblp:/rec/conf/approx/McGregor05.bib},
  _bib2doi_confirmed = {true},
}

@inproceedings{eggert2009bipartite,
  title = {Bipartite graph matchings in the semi-streaming model},
  author = {Eggert, Sebastian and Kliemann, Lasse and Srivastav, Anand},
  booktitle = {Proceedings of the 17th Annual European Symposium on Algorithms (ESA 2009)},
  pages = {492--503},
  year = {2009},
  organization = {Springer},
  timestamp = {Fri, 26 May 2017 01:00:00 +0200},
  biburl = {https://dblp.org/rec/conf/esa/EggertKS09.bib},
  bibsource = {dblp computer science bibliography, https://dblp.org},
  doi = {10.1007/978-3-642-04128-0_44},
  _bib2doi_selected = {dblp:/rec/conf/esa/EggertKS09.bib},
  _bib2doi_confirmed = {true},
}

@article{ahn2013linear,
  title = {Linear programming in the semi-streaming model with application to the maximum matching problem},
  author = {Ahn, Kook Jin and Guha, Sudipto},
  journal = {Information and Computation},
  volume = {222},
  pages = {59--79},
  year = {2013},
  publisher = {Elsevier},
  note = {Announced at ICALP 2011.},
  timestamp = {Fri, 12 Feb 2021 00:00:00 +0100},
  biburl = {https://dblp.org/rec/journals/iandc/AhnG13.bib},
  bibsource = {dblp computer science bibliography, https://dblp.org},
  doi = {10.1016/j.ic.2012.10.006},
  _bib2doi_selected = {dblp:/rec/journals/iandc/AhnG13.bib},
  _bib2doi_confirmed = {true},
}

@article{eggert2012bipartite,
  title = {Bipartite matching in the semi-streaming model},
  author = {Eggert, Sebastian and Kliemann, Lasse and Munstermann, Peter and Srivastav, Anand},
  journal = {Algorithmica},
  volume = {63},
  number = {1},
  pages = {490--508},
  year = {2012},
  publisher = {Springer},
  timestamp = {Wed, 17 May 2017 01:00:00 +0200},
  biburl = {https://dblp.org/rec/journals/algorithmica/EggertKMS12.bib},
  bibsource = {dblp computer science bibliography, https://dblp.org},
  doi = {10.1007/s00453-011-9556-8},
  _bib2doi_selected = {dblp:/rec/journals/algorithmica/EggertKMS12.bib},
  _bib2doi_confirmed = {true},
}

@inproceedings{kapralov2013better,
  title = {Better bounds for matchings in the streaming model},
  author = {Kapralov, Michael},
  booktitle = {Proceedings of the 24th Annual ACM-SIAM Symposium on Discrete Algorithms (SODA 2013)},
  pages = {1679--1697},
  year = {2013},
  organization = {SIAM},
  timestamp = {Sun, 06 Oct 2024 01:00:00 +0200},
  biburl = {https://dblp.org/rec/conf/soda/Kapralov13.bib},
  bibsource = {dblp computer science bibliography, https://dblp.org},
  doi = {10.1137/1.9781611973105.121},
  _bib2doi_selected = {dblp:/rec/conf/soda/Kapralov13.bib},
  _bib2doi_confirmed = {true},
}

@inproceedings{kapralov2021space,
  title={Space lower bounds for approximating maximum matching in the edge arrival model},
  author={Kapralov, Michael},
  booktitle={Proceedings of the 32th ACM-SIAM Symposium on Discrete Algorithms (SODA 2021)},
  pages={1874--1893},
  year={2021},
  organization={SIAM}
}

@inproceedings{assadi2026semi,
  title={Semi-streaming matching in a single pass: A new framework for lower bounds via blueprints},
  author={Assadi, Sepehr and Jiang, Max and Xiang, Mars},
  booktitle={Proceedings of the 58th Annual ACM Symposium on Theory of Computing (STOC 2026)},
  pages={979--990},
  year={2026}
}

@inproceedings{tirodkar2018deterministic,
  title = {Deterministic algorithms for maximum matching on general graphs in the semi-streaming model},
  author = {Tirodkar, Sumedh},
  booktitle = {Proceedings of the 38th IARCS Annual Conference on Foundations of Software Technology and Theoretical Computer Science (FSTTCS 2018)},
  year = {2018},
  organization = {Schloss-Dagstuhl-Leibniz Zentrum f{\"u}r Informatik},
  doi={10.4230/LIPIcs.FSTTCS.2018.39}
}

@inproceedings{assadi2021auction,
  title = {An auction algorithm for bipartite matching in streaming and massively parallel computation models},
  author = {Assadi, Sepehr and Liu, S Cliff and Tarjan, Robert E},
  booktitle = {Proceedings of the 4th Symposium on Simplicity in Algorithms (SOSA 2021)},
  pages = {165--171},
  year = {2021},
  organization = {SIAM},
  timestamp = {Wed, 17 Mar 2021 00:00:00 +0100},
  biburl = {https://dblp.org/rec/conf/sosa/AssadiLT21.bib},
  bibsource = {dblp computer science bibliography, https://dblp.org},
  doi = {10.1137/1.9781611976496.18},
  _bib2doi_selected = {dblp:/rec/conf/sosa/AssadiLT21.bib},
  _bib2doi_confirmed = {true},
}

@article{ahn2011laminar,
  title = {Laminar families and metric embeddings: {Non-bipartite} maximum matching problem in the semi-streaming model},
  author = {Ahn, Kook Jin and Guha, Sudipto},
  journal = {arXiv preprint arXiv:1104.4058},
  year = {2011},
  doi={10.48550/arXiv.1104.4058}
}

@article{ahn2018access,
  title = {Access to data and number of iterations: {Dual} primal algorithms for maximum matching under resource constraints},
  author = {Ahn, Kook Jin and Guha, Sudipto},
  journal = {ACM Transactions on Parallel Computing (TOPC)},
  volume = {4},
  number = {4},
  pages = {1--40},
  year = {2018},
  publisher = {ACM},
  note = {Announced at SPAA 2015.},
  timestamp = {Sat, 25 Apr 2020 01:00:00 +0200},
  biburl = {https://dblp.org/rec/journals/topc/AhnG18.bib},
  bibsource = {dblp computer science bibliography, https://dblp.org},
  doi = {10.1145/3154855},
  _bib2doi_selected = {dblp:/rec/journals/topc/AhnG18.bib},
  _bib2doi_confirmed = {true},
}

@inproceedings{assadi2022semi,
  title = {Semi-Streaming Bipartite Matching in Fewer Passes and Optimal Space},
  author = {Assadi, Sepehr and Jambulapati, Arun and Jin, Yujia and Sidford, Aaron and Tian, Kevin},
  booktitle = {Proceedings of the 33rd Annual ACM-SIAM Symposium on Discrete Algorithms (SODA 2022)},
  pages = {627--669},
  year = {2022},
  organization = {SIAM},
  timestamp = {Tue, 12 Apr 2022 01:00:00 +0200},
  biburl = {https://dblp.org/rec/conf/soda/AssadiJJST22.bib},
  bibsource = {dblp computer science bibliography, https://dblp.org},
  doi = {10.1137/1.9781611977073.29},
  _bib2doi_selected = {dblp:/rec/conf/soda/AssadiJJST22.bib},
  _bib2doi_confirmed = {true},
}

@inproceedings{goel2012communication,
  title = {On the communication and streaming complexity of maximum bipartite matching},
  author = {Goel, Ashish and Kapralov, Michael and Khanna, Sanjeev},
  booktitle = {Proceedings of the 23rd Annual ACM-SIAM Symposium on Discrete Algorithms (SODA 2012)},
  pages = {468--485},
  year = {2012},
  organization = {SIAM},
  timestamp = {Sun, 06 Oct 2024 01:00:00 +0200},
  biburl = {https://dblp.org/rec/conf/soda/GoelKK12.bib},
  bibsource = {dblp computer science bibliography, https://dblp.org},
  doi = {10.1137/1.9781611973099.41},
  _bib2doi_selected = {dblp:/rec/conf/soda/GoelKK12.bib},
  _bib2doi_confirmed = {true},
}

@article{assadi2026semib,
  title={Semi-Streaming Matching in a Single Pass II: Greedy is Optimal},
  author={Assadi, Sepehr and Jiang, Max and Xiang, Mars},
  journal={arXiv preprint arXiv:2607.14656},
  year={2026}
}

@InProceedings{konrad_et_al:LIPIcs.ESA.2026.120,
  author =	{Konrad, Christian and Naidu, Kheeran K. and Walton, Archie and Wang, Eric},
  title =	{On the {Assadi–Liu–Tarjan} Auction Algorithm for Bipartite Matching: Simplification, Alternative Analysis, and Hard Instance},
  booktitle =	{Proceedings of the 34th Annual European Symposium on Algorithms (ESA 2026)},
  pages =	{120:1--120:13},
  year =	{2026},
  volume =	{388},
  doi =		{10.4230/LIPIcs.ESA.2026.120},
}

@article{paz20182,
  title={A {$(2+ \varepsilon)$}-approximation for maximum weight matching in the semi-streaming model},
  author={Paz, Ami and Schwartzman, Gregory},
  journal={ACM Transactions on Algorithms (TALG)},
  volume={15},
  number={2},
  pages={1--15},
  year={2018},
  publisher={ACM New York, NY, USA},
  note={Announced at SODA'17.}
}

@InProceedings{ghaffari_et_al:OASIcs.SOSA.2019.13,
  author =	{Ghaffari, Mohsen and Wajc, David},
  title =	{Simplified and Space-Optimal Semi-Streaming {$(2+ \varepsilon)$}-Approximate Matching},
  booktitle =	{Proceedings of the 2nd Symposium on Simplicity in Algorithms (SOSA 2019)},
  pages =	{13:1--13:8},
  ISBN =	{978-3-95977-099-6},
  ISSN =	{2190-6807},
  year =	{2019},
  volume =	{69},
  doi =		{10.4230/OASIcs.SOSA.2019.13},
}

@article{zelke2012weighted,
  title={Weighted matching in the semi-streaming model},
  author={Zelke, Mariano},
  journal={Algorithmica},
  volume={62},
  number={1},
  pages={1--20},
  year={2012},
  publisher={Springer},
  note={Announed at STACS'08.}
}

@article{epstein2011improved,
  title={Improved approximation guarantees for weighted matching in the semi-streaming model},
  author={Epstein, Leah and Levin, Asaf and Mestre, Juli{\'a}n and Segev, Danny},
  journal={SIAM Journal on Discrete Mathematics},
  volume={25},
  number={3},
  pages={1251--1265},
  year={2011},
  publisher={SIAM},
  note={Announced at STACS'10.}
}

@inproceedings{crouch2014improved,
  title={Improved streaming algorithms for weighted matching, via unweighted matching},
  author={Crouch, Michael and Stubbs, Daniel M},
  booktitle={Proceedings of the Approximation, Randomization, and Combinatorial Optimization. Algorithms and Techniques (APPROX/RANDOM 2014)},
  year={2014},
  organization={Schloss Dagstuhl-Leibniz-Zentrum f{\"u}r Informatik}
}

@inproceedings{assadi2023hidden,
  title={Hidden permutations to the rescue: Multi-pass streaming lower bounds for approximate matchings},
  author={Assadi, Sepehr and Sundaresan, Janani},
  booktitle={Proceedings of the IEEE 64th Annual Symposium on Foundations of Computer Science (FOCS)},
  pages={909--932},
  year={2023},
  organization={IEEE}
}

@article{sundaresan2026new,
  title={New Techniques for Lower Bounds in Graph Streaming and Distributed Computing},
  author={Sundaresan, Janani},
  year={2026},
  publisher={University of Waterloo}
}

@book{schrijver2003combinatorial,
  title = {Combinatorial optimization: polyhedra and efficiency},
  author = {Schrijver, Alexander},
  volume = {24},
  year = {2003},
  publisher = {Springer},
}

\end{document}